\newtheorem{lemma}{Lemma}
\newtheorem{proposition}{Proposition}
\begin{document}
\title{Stable Relationships}
\author{Sam Ganzfried\\
Ganzfried Research\\
sam.ganzfried@gmail.com
}

\date{\vspace{-5ex}}

\maketitle

\abstract{We study a dynamic model of the relationship between two people where the states depend on the ``power'' in the relationship. We perform a comprehensive analysis of stability of the system, and determine a set of conditions under which stable relationships are possible. In particular, stable relationships can occur if both people are dominant, but the sum of dominances is below a bound determined by the model's parameters. Stable relationships can also occur if one person is dominant and the other is submissive, provided the level of dominance exceeds the level of submissiveness but not beyond a threshold. We also conclude that a stable relationship is not possible if both people are submissive. While our model is motivated by a social or romantic relationship, it can also be applied to professional or business relationships as well as diplomatic relationships between nations. }

\section{Introduction and model}
\label{se:intro}

We study a dynamic model of the relationship between two people, $A$ and $B.$ The amount that they each ``like'' each other depends on the amount they liked each other at the previous timestep as well as the ``power'' in the relationship. Let $A(t)$ denote the amount that $A$ likes $B$ at time $t$, and $B(t)$ the amount that $B$ likes $A.$  Let $P(t)$ denote the power of player $A$ at time $t$ ($-P(t)$ denotes the power of player $B$).

\begin{eqnarray*}
P(t) &= &\gamma(B(t) - A(t))\\
A(t+1) &= &A(t) + \alpha P(t)\\
B(t+1) &= &B(t) - \beta P(t)
\end{eqnarray*}

We assume that $\gamma > 0$, and that either $\alpha \neq 0$ or $\beta \neq 0$ (if $\alpha = \beta = 0$ then neither player's value depends on the power and the model is trivial). In general we can assume that $\gamma = 1$, and can substitute in $\alpha' = \alpha \gamma$, $\beta' = \beta \gamma$ without affecting our analysis. This will reduce the number of parameters and simplify the model. However, we will choose to keep $\gamma$ in the model, as the model is not very complex and this permits a more intuitive interpretation of the parameters. We will also assume that $A(0) \neq B(0)$, since in that case the system will continue to remain at the starting value.

We can substitute $P(t)$ into the other expressions to obtain the following system:

\begin{equation}
\begin{aligned}
A(t+1) &= A(t) (1 - \alpha \gamma) + \alpha \gamma B(t)\\
B(t+1) &= B(t) (1 - \beta \gamma) + \beta \gamma A(t)
\label{eq:main}
\end{aligned}
\end{equation}

\begin{proposition}
$(A^*,B^*)$ is an equilibrium point of the system defined by Equation~\ref{eq:main} if and only if $A^* = B^*.$
\end{proposition}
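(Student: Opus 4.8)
The plan is to work directly from the definition of an equilibrium (fixed) point of Equation~\ref{eq:main}: a pair $(A^*,B^*)$ is an equilibrium precisely when feeding it into the update rule reproduces it, i.e. $A^* = A^*(1-\alpha\gamma) + \alpha\gamma B^*$ and $B^* = B^*(1-\beta\gamma) + \beta\gamma A^*$. Since the statement is an ``if and only if,'' I would prove the two directions separately, and both reduce to elementary algebra.

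For the forward direction, I would subtract $A^*$ from both sides of the first fixed-point equation, which collapses it to $\alpha\gamma(B^* - A^*) = 0$; the second equation similarly becomes $\beta\gamma(A^* - B^*) = 0$. Because $\gamma > 0$ is assumed, $\gamma \neq 0$, so these simplify to $\alpha(B^* - A^*) = 0$ and $\beta(A^* - B^*) = 0$. The key step is then to invoke the standing hypothesis that $\alpha \neq 0$ or $\beta \neq 0$: if $\alpha \neq 0$, the first equation forces $B^* - A^* = 0$; if instead $\beta \neq 0$, the second does the same. In either case $A^* = B^*$.

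For the reverse direction, I would substitute $A^* = B^*$ into the right-hand sides of Equation~\ref{eq:main}. The first becomes $A^*(1-\alpha\gamma) + \alpha\gamma A^* = A^*$ and the second becomes $B^*(1-\beta\gamma) + \beta\gamma B^* = B^*$, so the point is indeed fixed. This completes the equivalence.

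I expect no genuine obstacle, as the content is linear and the computation is short. The only point requiring care is that the conclusion depends essentially on the disjunctive hypothesis $\alpha \neq 0$ or $\beta \neq 0$: recognizing that this assumption is exactly what rules out the degenerate case (if $\alpha = \beta = 0$ every point would be an equilibrium and the claim would be false) is the one conceptual step, and it is precisely the condition the model already stipulates.
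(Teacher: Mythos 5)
Your proof is correct and follows essentially the same route as the paper: reduce the fixed-point equations to $\alpha(B^*-A^*)=0$ and $\beta(A^*-B^*)=0$ using $\gamma>0$, then invoke the hypothesis that $\alpha\neq 0$ or $\beta\neq 0$. The only difference is that you explicitly verify the (trivial) converse direction, which the paper leaves implicit.
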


\begin{proof}
In an equilibrium, we have that
$$A^* = A^*(1 - \alpha \gamma) + \alpha \gamma B^*$$
$$A^*(1 - 1 + \alpha \gamma) = \alpha \gamma B^*$$
$$\alpha \gamma A^* = \alpha \gamma B^*$$
Since $\gamma > 0$,
$$\alpha A^* = \alpha B^*$$

Similarly, we have
$$\beta B^* = \beta A^*.$$

By assumption, either $\alpha \neq 0$ or $\beta \neq 0$ (or both).
In either case, we have that $A^* = B^*.$
\end{proof}

In order to ascertain the stability of the system, we calculate the eigenvalues of the matrix $M$ corresponding to the system defined by Equation~\ref{eq:main}:

\begin{eqnarray*}
M = 
\begin{bmatrix}
(1 - \alpha \gamma) & \alpha \gamma\\
\beta \gamma & (1 - \beta \gamma) \\
\end{bmatrix}
\end{eqnarray*}

The eigenvalues are
\begin{eqnarray*}
\lambda_1 &= &1\\
\lambda_2 &= &1 - \alpha \gamma - \beta \gamma
\end{eqnarray*}

The corresponding eigenvectors are
\begin{eqnarray*}
v_1 &= &(1,1)\\
v_2 &= &\left(-\frac{\alpha}{\beta}, 1\right)
\end{eqnarray*}

The central concept in analysis of dynamic systems is that of stability~\cite{Luenberger79:Introduction}. A system is \emph{stable} if $|\lambda_L| < 1$, where $\lambda_L$ is the eigenvalue with largest absolute value. If a system is stable, then the dynamics will converge to an equilibrium point. If $|\lambda_L| > 1$, then the system is \emph{unstable}, and will diverge to $\pm \infty.$ If $|\lambda_L| = 1$, then the system is \emph{marginally stable}. Marginal stability is often associated with being a middle ground between these two extremes: ``A marginal system, sometimes referred to as having neutral stability, is between these two types [asymptotically stable and unstable]: when displaced, it does not return to near a common steady state, nor does it go away from where it started without limit''~\cite{Wiki22:Marginal}. It turns out that a marginally stable system can actually exhibit a wide range of behavior such as oscillating between several points, but also converging to an equilibrium (as stable systems) and diverging to infinity (as an unstable system). Essentially, if a system is marginally stable this means that further investigation is needed to determine asymptotic behavior.

\section{Stability characterization}
\label{se:stability}

Since $\lambda_1 = 1$, the system is either marginally stable or unstable depending on whether $|\lambda_2| > 1.$
If $|\lambda_2| > 1$, then the system is unstable, and if $|\lambda_2| \leq 1$ it is marginally stable.

\begin{proposition}
\label{pr:ms}
The system is marginally stable if and only if 
$$-\alpha \leq \beta \leq \frac{2}{\gamma} - \alpha.$$ 
\end{proposition}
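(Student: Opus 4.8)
The plan is to translate the marginal stability condition directly into an inequality on the parameters. Since we already established that $\lambda_1 = 1$ and that the system is marginally stable exactly when $|\lambda_2| \leq 1$, the entire proposition reduces to unpacking the single inequality $|\lambda_2| \leq 1$ with $\lambda_2 = 1 - \alpha\gamma - \beta\gamma$.

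First I would write $|\lambda_2| \leq 1$ as the double inequality $-1 \leq 1 - \alpha\gamma - \beta\gamma \leq 1$. I would then handle the two sides separately. The right inequality $1 - \alpha\gamma - \beta\gamma \leq 1$ simplifies to $-\alpha\gamma - \beta\gamma \leq 0$, i.e. $\alpha\gamma + \beta\gamma \geq 0$; dividing by $\gamma > 0$ (which is legitimate and preserves the direction since $\gamma$ is positive) gives $\alpha + \beta \geq 0$, equivalently $\beta \geq -\alpha$. The left inequality $-1 \leq 1 - \alpha\gamma - \beta\gamma$ rearranges to $\alpha\gamma + \beta\gamma \leq 2$, and dividing by $\gamma$ yields $\alpha + \beta \leq \tfrac{2}{\gamma}$, equivalently $\beta \leq \tfrac{2}{\gamma} - \alpha$. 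Combining the two bounds gives exactly $-\alpha \leq \beta \leq \tfrac{2}{\gamma} - \alpha$.

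Because each algebraic step is an equivalence rather than a one-way implication, the argument runs in both directions simultaneously, so the ``if and only if'' is automatic and I would not need a separate converse. The main thing to be careful about is the sign handling when dividing by $\gamma$: everything is clean precisely because the hypothesis $\gamma > 0$ guarantees the inequalities do not flip. I would state this use of $\gamma > 0$ explicitly to justify the division.

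I do not anticipate a genuine obstacle here, as the computation is routine once the eigenvalues are in hand; the only conceptual point worth emphasizing is that the prior reduction (namely that $\lambda_1 = 1$ forces the marginal-versus-unstable dichotomy to hinge solely on $|\lambda_2|$) is what lets us ignore $\lambda_1$ entirely and reduce the whole question to this one inequality.
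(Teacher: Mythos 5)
Your proposal is correct and follows essentially the same route as the paper: both arguments reduce the claim to the single inequality $|\lambda_2| \leq 1$ with $\lambda_2 = 1 - \alpha\gamma - \beta\gamma$ and unpack it by dividing by $\gamma > 0$; the paper merely phrases the algebra contrapositively (characterizing when $\lambda_2 > 1$ and when $\lambda_2 < -1$) whereas you work the double inequality directly. Your explicit note that $\gamma > 0$ justifies the division without flipping signs matches the paper's (implicit) use of the same fact.
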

\begin{proof}
$\lambda_2 > 1$ if and only if
$$-\alpha \gamma - \beta \gamma + 1 > 1$$
$$\leftrightarrow -\alpha \gamma - \beta \gamma > 0$$
$$\leftrightarrow \beta < -\alpha.$$

$\lambda_2 < -1$ if and only if
$$-\alpha \gamma - \beta \gamma + 1 < -1$$
$$\leftrightarrow -\alpha \gamma - \beta \gamma < -2$$
$$\leftrightarrow \alpha + \beta > \frac{2}{\gamma}$$

So the system is marginally stable if and only if
$$-\alpha \leq \beta \leq \frac{2}{\gamma} - \alpha.$$ 
\end{proof}

Note that $\alpha > 0$ means that player $A$ prefers more power, and $\beta > 0$ means that $B$ prefers more power. Say that a player is \emph{dominant} if they prefer more power, and \emph{submissive} if they prefer less power. The \emph{dominance} of $A$ is $\alpha$, and the \emph{submissiveness} of $A$ is $-\alpha$; similarly, the dominance of $B$ is $\beta$, and the submissiveness of $B$ is $-\beta.$

We now analyze the implications of Proposition~\ref{pr:ms} on the different cases of dominance/submissiveness for the players.

\begin{enumerate}
\item Both $A$ and $B$ are dominant: \\
We have $\alpha \geq 0$, $\beta \geq 0$, and furthermore:
$$-\frac{2}{\gamma} \leq -\alpha \leq 0 \leq \beta \leq \frac{2}{\gamma} - \alpha$$
This is equivalent to the constraint that the sum of the dominances is at most $\frac{2}{\gamma}.$

\item Both $A$ and $B$ are submissive: \\ 
We have $\alpha \leq 0$, $\beta \leq 0$:
$$0 \leq -\alpha \leq \beta \leq \frac{2}{\gamma} - \alpha$$
This can only hold if $\alpha = \beta = 0$, which is excluded by assumption in our model.

\item $A$ is dominant and $B$ is submissive: \\
We have $\alpha \geq 0$, $\beta \leq 0$:
$$-\alpha \leq \beta \leq \frac{2}{\gamma} - \alpha.$$ 
This means that $B$ cannot be more submissive than $A$ is dominant, and $A$ cannot be more dominant than $B$'s submissiveness plus $\frac{2}{\gamma}.$

\item $B$ is dominant and $A$ is submissive: \\
We have $\alpha \leq 0$, $\beta \geq 0$: 
$$0 \leq -\alpha \leq \beta \leq \frac{2}{\gamma} - \alpha$$ 
This means that $A$ cannot be more submissive than $B$ is dominant, and $B$ cannot be more dominant than $A$'s submissiveness plus $\frac{2}{\gamma}.$
\end{enumerate}

In summary, we can achieve a marginally stable system in which both players are dominant, with the sum of dominance bounded by $\frac{2}{\gamma}.$ We can never reach a situation where both players are submissive. We can also reach situations in which $A$ is more (or equally) dominant than $B$ is submissive by at most $\frac{2}{\gamma}$, and situations where $B$ is more (or equally) dominant than $A$ is submissive by at most $\frac{2}{\gamma}.$

\section{Further analysis of marginally stable behavior}
\label{se:ms}
The eigendecomposition of transition matrix $M$ is $M = Q \Lambda Q^{-1}$, where

\begin{equation*}
Q = 
\begin{bmatrix}
1 & -\frac{\alpha}{\beta}\\
1 & 1 \\
\end{bmatrix}
\end{equation*}

\begin{equation*}
\Lambda = 
\begin{bmatrix}
1 & 0\\
0 & 1 - \alpha \gamma - \beta \gamma \\
\end{bmatrix}
\end{equation*}

We have that $M^t = (Q \Lambda Q^{-1})^t = Q \Lambda^t Q^{-1}.$

\begin{equation}
\begin{aligned}
M^t = \frac{1}{\alpha + \beta}
\begin{bmatrix}
\beta + \alpha (1 - \alpha \gamma - \beta \gamma)^t & \alpha - \alpha(1 - \alpha \gamma - \beta \gamma)^t\\
\beta - \beta(1 - \alpha \gamma - \beta \gamma)^t & \alpha + \beta(1 - \alpha \gamma - \beta \gamma)^t\\
\end{bmatrix}
\end{aligned}
\label{eq:matrix}
\end{equation}

Asymptotic behavior of the system is determined by $\lim_{t \rightarrow \infty} M^t.$

\begin{enumerate}
\item $1 - \alpha \gamma - \beta \gamma = 1$: \\
At first glance the system is stable, since we have that $M^t = I.$ However, in this case the matrix $M$ is not actually diagonalizable (the matrix $Q$ is not invertible), so we cannot draw any conclusions based on the eigendecomposition. In Proposition~\ref{pr:unstable}, we show that the system is actually unstable.

\item $0 \leq |1 - \alpha \gamma - \beta \gamma| < 1$: \\
We have

\begin{equation*}
\lim_{t \rightarrow \infty} M^t = \frac{1}{\alpha + \beta}
\begin{bmatrix}
\beta & \alpha\\
\beta & \alpha\\
\end{bmatrix}
\end{equation*}

So we have that 
$$\lim_{t \rightarrow \infty} A(t) = \lim_{t \rightarrow \infty} B(t) 
= \frac{\beta A(0) + \alpha B(0)}{\alpha + \beta}.$$

So the system is stable and converges to the equilibrium point $(A^*,B^*)$ with
$$A^* = B^* = \frac{\beta A(0) + \alpha B(0)}{\alpha + \beta}.$$

\item $1 - \alpha \gamma - \beta \gamma = -1$:\\
For $t$ even, we have

$$\lim_{t \rightarrow \infty} M^t = I.$$
So $\lim_{t \rightarrow \infty} A(t) = A(0)$, $\lim_{t \rightarrow \infty} B(t) = B(0).$

For $t$ odd, we have

\begin{equation*}
\lim_{t \rightarrow \infty} M^t = \frac{1}{\alpha + \beta}
\begin{bmatrix}
\beta - \alpha & 2\alpha\\
2\beta & \alpha - \beta\\
\end{bmatrix}
\end{equation*}

So 
$$\lim_{t \rightarrow \infty} A(t) = \frac{(\beta - \alpha)A(0) + 2 \alpha B(0)}{\alpha + \beta}$$
$$\lim_{t \rightarrow \infty} B(t) = \frac{(\alpha - \beta)B(0) + 2 \beta A(0)}{\alpha + \beta}$$

So the system will oscillate between two points:
$$P_1 = (A_1,B_1) = (A(0), B(0))$$
$$P_2 = (A_2,B_2) = \left(\frac{(\beta - \alpha)A(0) + 2 \alpha B(0)}{\alpha + \beta}, \frac{(\alpha - \beta)B(0) + 2 \beta A(0)}{\alpha + \beta}\right)$$

If $A(0) = B(0)$, then the system trivially just stays at the initial point since it is an equilibrium. Otherwise, the system will oscillate between $P_1$ and $P_2$, neither of which are equilibrium points. However, the average of $P_1$ and $P_2$ is an equilibrium point:
$$(A^*,B^*) = \left(\frac{\beta A(0) + \alpha B(0)}{\alpha + \beta}, \frac{\beta A(0) + \alpha B(0)}{\alpha + \beta}\right)$$
\end{enumerate}

\begin{proposition}
\label{pr:unstable}
If $1 - \alpha \gamma - \beta \gamma = 1$, then the system is unstable.
\end{proposition}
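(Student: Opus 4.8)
The plan is to reduce the hypothesis to a single relation between the parameters, compute $M^t$ in closed form by exploiting a nilpotent structure, and then read off unbounded growth. First I would note that, since $\gamma > 0$, the hypothesis $1 - \alpha \gamma - \beta \gamma = 1$ is equivalent to $\alpha + \beta = 0$, i.e. $\beta = -\alpha$. The standing assumption that $\alpha$ and $\beta$ are not both zero then forces $\alpha \neq 0$, so the quantity $c := \alpha \gamma$ is nonzero; this nonvanishing is exactly what will drive the divergence.

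Next I would substitute $\beta = -\alpha$ into $M$ to obtain
$$M = \begin{bmatrix} 1 - c & c \\ -c & 1 + c \end{bmatrix},$$
and set $N := M - I$, whose two rows are both $(-c,\,c)$. The key algebraic fact is that $N$ is nilpotent of index two, $N^2 = 0$. Since $I$ and $N$ commute and $N^2 = 0$, the binomial expansion of $(I + N)^t$ collapses to its first two terms, giving the exact formula
$$M^t = I + tN = \begin{bmatrix} 1 - tc & tc \\ -tc & 1 + tc \end{bmatrix}.$$

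Applying this to the initial state yields $A(t) = A(0) + tc\,(B(0) - A(0))$ and $B(t) = B(0) + tc\,(B(0) - A(0))$, so the difference $A(t) - B(t) = A(0) - B(0)$ is conserved while both coordinates share the linearly growing increment $tc\,(B(0) - A(0))$. Whenever $A(0) \neq B(0)$ this increment is nonzero, so both $A(t)$ and $B(t)$ run off to $+\infty$ or both to $-\infty$ (according to the sign of $c\,(B(0) - A(0))$); the trajectory is unbounded. Hence the system diverges from every initial condition off the line $A(0) = B(0)$, which is the definition of instability.

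The main obstacle here is conceptual rather than computational. Because both eigenvalues equal $1$ and the eigenvector matrix $Q$ is singular (as already noted in the text)---indeed the second eigenvector $v_2 = (-\alpha/\beta, 1)$ collapses onto $v_1 = (1,1)$ precisely when $\beta = -\alpha$---the matrix $M$ is defective, and the usual ``$|\lambda_L| = 1 \Rightarrow$ marginally stable'' heuristic does not apply. One must therefore analyze $M^t$ directly, and the cleanest device is the nilpotent decomposition above rather than a Jordan-basis change of coordinates. A secondary subtlety worth flagging in the writeup is that on the invariant line $A(0) = B(0)$ the state sits at an equilibrium and never moves; instability is then the statement that arbitrarily small perturbations transverse to this line grow without bound, which is exactly what the factor $tc$ in $M^t$ certifies.
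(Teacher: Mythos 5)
Your proof is correct, and it reaches the key closed form by a genuinely cleaner route than the paper's. The paper likewise reduces the hypothesis to $\beta = -\alpha$ and computes $M^t$ exactly, but it does so through an explicit Jordan decomposition $M = SJS^{-1}$ with $J$ the $2\times 2$ Jordan block for eigenvalue $1$, together with a separate induction lemma establishing $J^t$. Your observation that $N := M - I$ has both rows equal to $(-c,\,c)$ and hence $N^2 = 0$, so that $M^t = (I+N)^t = I + tN$ because $I$ and $N$ commute, is the coordinate-free version of the same idea: it avoids choosing a Jordan basis, inverting $S$, and proving the auxiliary lemma. A concrete payoff of your computation is that your formula
$$M^t = \begin{bmatrix} 1 - tc & tc \\ -tc & 1 + tc \end{bmatrix}$$
is the correct one, whereas the paper's displayed $M^t$ contains an error in the $(2,2)$ entry (it reads $t\alpha\gamma$ where it should read $1 + t\alpha\gamma$; the paper's formula already fails at $t=1$, where it does not reproduce $M$), and consequently the paper's expression for $B(t)$ is missing the $B(0)$ term. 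Neither slip affects the instability conclusion---what matters is the linear-in-$t$ growth with nonzero coefficient $c\,(B(0)-A(0))$ for any initial condition off the line $A(0)=B(0)$, which you identify correctly---but your version of the closed form is the one that should be recorded. Your closing remark that instability here means unbounded growth of arbitrarily small perturbations transverse to the equilibrium line is also the right way to state the conclusion, since trajectories starting on that line never move.
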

\begin{proof}
Note that the transition matrix is 
\begin{eqnarray*}
M = 
\begin{bmatrix}
(1 - \alpha \gamma) & \alpha \gamma\\
\beta \gamma & (1 - \beta \gamma) \\
\end{bmatrix}
= 
\begin{bmatrix}
(1 - \alpha \gamma) & \alpha \gamma\\
-\alpha \gamma & (1 + \alpha \gamma) \\
\end{bmatrix}
\end{eqnarray*}

Since this matrix is not diagonalizable, we cannot construct the eigendecomposition. However, we can compute
$\lim_{t \rightarrow \infty}M^t$ by instead using the Jordan decomposition. We have $M = S J S^{-1}$ where

\begin{equation*}
S = 
\begin{bmatrix}
1 & -\frac{1}{\alpha \gamma}\\
1 & 0 \\
\end{bmatrix}
\end{equation*}

\begin{equation*}
J = 
\begin{bmatrix}
1 & 1\\
0 & 1 \\
\end{bmatrix}
\end{equation*}

\begin{equation*}
S^{-1} = 
\begin{bmatrix}
0 & 1\\
-\alpha \gamma & \alpha \gamma \\
\end{bmatrix}
\end{equation*}

By Lemma~\ref{le:Jt},
\begin{equation*}
J^t = 
\begin{bmatrix}
1 & t\\
0 & 1 \\
\end{bmatrix}
\end{equation*}

\begin{equation*}
M^t = S J^t S^{-1} = 
\begin{bmatrix}
1 - t \alpha \gamma  & t \alpha \gamma\\
-t \alpha \gamma & 1 + t \alpha \gamma \\
\end{bmatrix}
\end{equation*}

We can see that the system is clearly unstable (for $A(0) \neq B(0)$), with
\begin{eqnarray*}
A(t) &= &A(0) + t \alpha \gamma (B(0) - A(0))\\
B(t) &= &B(0) + t \alpha \gamma (B(0) - A(0)) \\
\end{eqnarray*}

\end{proof}

\begin{lemma}
\label{le:Jt}
For all $t \geq 1$,
\begin{equation*}
J^t = 
\begin{bmatrix}
1 & t\\
0 & 1 \\
\end{bmatrix}
\end{equation*}
\end{lemma}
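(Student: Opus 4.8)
The plan is to prove this by induction on $t$, exploiting the fact that $J$ is an upper-triangular unipotent matrix whose powers only accumulate in the off-diagonal entry. This is the canonical approach for computing powers of a Jordan block, and the structure of $J$ makes the induction especially clean.

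First I would establish the base case $t = 1$, where $J^1 = J = \begin{bmatrix} 1 & 1\\ 0 & 1 \end{bmatrix}$ matches the claimed form directly. Then for the inductive step, I would assume the hypothesis $J^t = \begin{bmatrix} 1 & t\\ 0 & 1 \end{bmatrix}$ holds for some $t \geq 1$ and compute $J^{t+1} = J^t J$ by multiplying the induction hypothesis on the right by $J$. Carrying out the $2 \times 2$ product gives $\begin{bmatrix} 1 & t\\ 0 & 1 \end{bmatrix}\begin{bmatrix} 1 & 1\\ 0 & 1 \end{bmatrix} = \begin{bmatrix} 1 & t+1\\ 0 & 1 \end{bmatrix}$, which is exactly the claimed form at $t+1$, completing the induction.

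There is no real obstacle here: the only thing to verify is that the top-right entry increments by exactly $1$ at each step, which follows from the $(1,1)$ and $(2,2)$ entries of $J$ both equaling $1$, so the top-right entry of $J^t J$ is $1 \cdot 1 + t \cdot 1 = t+1$ while the diagonal stays fixed at $1$ and the bottom-left stays $0$. An alternative, equally short route would be to write $J = I + N$ where $N = \begin{bmatrix} 0 & 1\\ 0 & 0 \end{bmatrix}$ is nilpotent with $N^2 = 0$, and apply the binomial theorem (valid since $I$ and $N$ commute): $J^t = (I+N)^t = I + tN$, since all higher powers of $N$ vanish. Either argument suffices, and I would favor the induction for its self-contained elementarity.
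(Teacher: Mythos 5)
Your proof is correct and takes essentially the same approach as the paper: induction on $t$ with the base case $t=1$ and a direct $2\times 2$ multiplication for the inductive step (the paper multiplies by $J$ on the left rather than the right, which is immaterial). The $J = I + N$ binomial alternative you mention is a nice aside but not needed.
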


\begin{proof}
The statement clearly holds for $t = 1.$ Suppose it holds for $t = k$ for some $k \geq 1.$
\begin{equation*}
J^{k+1} = J J^k =
\begin{bmatrix}
1 & 1\\
0 & 1 \\
\end{bmatrix}
\begin{bmatrix}
1 & k\\
0 & 1 \\
\end{bmatrix}
= 
\begin{bmatrix}
1 & k+1\\
0 & 1 \\
\end{bmatrix}
\end{equation*}
\end{proof}

\section{Further analysis of unstable behavior}
\label{se:us}
We have seen several scenarios under which the system is unstable. In this section we explore the asymptotic behavior of these scenarios.

\begin{enumerate}
\item $1 - \alpha \gamma - \beta \gamma = 1$: 
\begin{eqnarray*}
\lim_{t \rightarrow \infty} A(t) &= &sign(\alpha)sign(B(0) - A(0))\infty\\
\lim_{t \rightarrow \infty} B(t) &= &sign(\alpha)sign(B(0) - A(0))\infty = sign(\beta)sign(A(0) - B(0))\infty\\
\end{eqnarray*}
Note that in this case we have $\alpha = - \beta$, so by assumption we do not have $\alpha = \beta = 0.$ So $\alpha \neq 0.$
The asymptotic behavior depends on the sign of $\alpha$ and on the sign of the difference in initial states $B(0) - A(0).$
If $\alpha > 0$ and $B(0) > A(0)$, or if $\alpha < 0$ and $A(0) > B(0)$, then the system will diverge to $(\infty,\infty).$
Otherwise, the system diverges to $(-\infty,-\infty).$ Thus, the system diverges to $(\infty,\infty)$ if the dominant player has a lower initial state value, and diverges to $(-\infty,-\infty)$ if the dominant player
has a higher initial state value.

\item $1 - \alpha \gamma - \beta \gamma > 1$, $\alpha \neq 0$, $\beta \neq 0$: \\
Recall the formula for $M^t$ from Equation~\ref{eq:matrix}. We have:

\begin{equation*}
\lim_{t \rightarrow \infty} M^t = \frac{1}{\alpha + \beta}
\begin{bmatrix}
sign(\alpha)\infty & -sign(\alpha)\infty\\
-sign(\beta)\infty & sign(\beta)\infty\\
\end{bmatrix}
\end{equation*}

Note that $1 - \alpha \gamma - \beta \gamma > 1$ implies that $\alpha + \beta < 0.$
So the denominator is negative. So therefore,

\begin{equation*}
\lim_{t \rightarrow \infty} M^t =
\begin{bmatrix}
-sign(\alpha)\infty & sign(\alpha)\infty\\
sign(\beta)\infty & -sign(\beta)\infty\\
\end{bmatrix}
\end{equation*}

This implies that:

\begin{eqnarray*}
\lim_{t \rightarrow \infty} A(t) &= &sign(\alpha)sign(B(0) - A(0))\infty\\
\lim_{t \rightarrow \infty} B(t) &= &sign(\beta)sign(A(0) - B(0))\infty\\
\end{eqnarray*}

If $\alpha$ and $\beta$ have different signs, then the system will diverge to $(\infty,\infty)$ or $(-\infty,-\infty)$ depending on the sign of $B(0) - A(0).$
If $\alpha$ and $\beta$ are both negative, then the system will diverge to $(\infty,-\infty)$ or $(-\infty,\infty)$ depending on the sign of $B(0) - A(0).$
Note that $\alpha$ and $\beta$ cannot both be positive in this situation since  $\alpha + \beta < 0.$ We are also excluding the case that $\alpha = 0$ or $\beta = 0.$

\item $1 - \alpha \gamma - \beta \gamma > 1$, $\alpha = 0$: \\
We must have $\beta < 0$. We have:

\begin{equation*}
\lim_{t \rightarrow \infty} M^t = \frac{1}{\beta}
\begin{bmatrix}
\beta & 0\\
-sign(\beta)\infty & sign(\beta)\infty\\
\end{bmatrix}
= 
\begin{bmatrix}
1 & 0\\
-\infty & \infty\\
\end{bmatrix}
\end{equation*}

This implies that:

\begin{eqnarray*}
\lim_{t \rightarrow \infty} A(t) &= &A(0)\\
\lim_{t \rightarrow \infty} B(t) &= &sign(B(0) - A(0))\infty\\
\end{eqnarray*}

So the system will diverge to $(A(0),\infty)$ or $(A(0),-\infty)$ depending on the sign of $B(0) - A(0).$

\item $1 - \alpha \gamma - \beta \gamma > 1$, $\beta = 0$: \\
By analogous reasoning to the previous case, we have:

\begin{eqnarray*}
\lim_{t \rightarrow \infty} A(t) &= &sign(A(0) - B(0))\infty\\
\lim_{t \rightarrow \infty} B(t) &= &B(0)\\
\end{eqnarray*}

So the system will diverge to $(\infty,B(0))$ or $(-\infty,B(0))$ depending on the sign of $A(0) - B(0).$

\item $1 - \alpha \gamma - \beta \gamma < -1$, $\alpha \neq 0$, $\beta \neq 0$:\\
From Equation~\ref{eq:matrix}, we see that the behavior will alternate between $\infty$ and $-\infty$ for both
players depending on the parity of $t.$ For even $t$:
 
\begin{equation*}
\lim_{t \rightarrow \infty} M^t =
\begin{bmatrix}
sign(\alpha)\infty & -sign(\alpha)\infty\\
-sign(\beta)\infty & sign(\beta)\infty\\
\end{bmatrix}
\end{equation*}

\begin{eqnarray*}
\lim_{t \rightarrow \infty} A(t) &= &sign(\alpha)sign(A(0) - B(0))\infty\\
\lim_{t \rightarrow \infty} B(t) &= &sign(\beta)sign(B(0) - A(0))\infty\\
\end{eqnarray*}

For odd $t$:

\begin{equation*}
\lim_{t \rightarrow \infty} M^t =
\begin{bmatrix}
-sign(\alpha)\infty & sign(\alpha)\infty\\
sign(\beta)\infty & -sign(\beta)\infty\\
\end{bmatrix}
\end{equation*}

\begin{eqnarray*}
\lim_{t \rightarrow \infty} A(t) &= &sign(\alpha)sign(B(0) - A(0))\infty\\
\lim_{t \rightarrow \infty} B(t) &= &sign(\beta)sign(A(0) - B(0))\infty\\
\end{eqnarray*}

If both $\alpha$ and $\beta$ are positive, then the system will alternate between $(\infty,-\infty)$ and $(-\infty,\infty)$. 
If one of them is positive and the other is negative, the system will alternate between $(\infty,\infty)$ and $(-\infty,-\infty)$. 
Note that we can not have both $\alpha$ and $\beta$ negative under this case, since $1 - \alpha \gamma - \beta \gamma < -1$ implies
that $\alpha + \beta > \frac{2}{\gamma}.$

\item $1 - \alpha \gamma - \beta \gamma < -1$, $\alpha = 0$:\\
We must have $\beta > 0$. For even $t$:

\begin{equation*}
\lim_{t \rightarrow \infty} M^t = \frac{1}{\beta}
\begin{bmatrix}
\beta & 0\\
-sign(\beta)\infty & sign(\beta)\infty\\
\end{bmatrix}
= 
\begin{bmatrix}
1 & 0\\
-\infty & \infty\\
\end{bmatrix}
\end{equation*}

\begin{eqnarray*}
\lim_{t \rightarrow \infty} A(t) &= &A(0)\\
\lim_{t \rightarrow \infty} B(t) &= &sign(B(0) - A(0))\infty\\
\end{eqnarray*}

For odd $t$:

\begin{equation*}
\lim_{t \rightarrow \infty} M^t = \frac{1}{\beta}
\begin{bmatrix}
\beta & 0\\
sign(\beta)\infty & -sign(\beta)\infty\\
\end{bmatrix}
= 
\begin{bmatrix}
1 & 0\\
\infty & -\infty\\
\end{bmatrix}
\end{equation*}

\begin{eqnarray*}
\lim_{t \rightarrow \infty} A(t) &= &A(0)\\
\lim_{t \rightarrow \infty} B(t) &= &sign(A(0) - B(0))\infty\\
\end{eqnarray*}

So the system will alternate between $(A(0),\infty)$ and $(A(0),-\infty).$

\item $1 - \alpha \gamma - \beta \gamma < -1$, $\beta = 0$:\\
By analogous reasoning to case 6, the system will alternate between $(\infty,B(0))$ and $(-\infty,B(0)).$
\end{enumerate}

\section{Conclusions}
\label{se:conclusions}
We have seen that the system can exhibit a wide range of behavior depending on the value of $\lambda_2 = 1 - \alpha \gamma - \beta \gamma$ as well as the signs of $\alpha$, $\beta$, and $B(0) - A(0).$ While analysis of the eigenvalues shows that the system is either marginally stable or unstable, this does not tell the full story. Within the set of marginally stable scenarios we have shown that the system can be stable, unstable, or oscillatory. In particular, we show that stable relationships are possible under our dynamics under certain sets of conditions. The first is that both people are dominant, but not too dominant. They do not need to be equally dominant, but the sum of the dominances must be strictly below $\frac{2}{\gamma}.$ The second is that one person is dominant and the other is submissive, and the dominance of the dominant person strictly exceeds the submissiveness of the submissive person by less than $\frac{2}{\gamma}.$ Note that the magnitudes of the dominance and submissiveness must be similar, but both can be very small or large in absolute terms. Finally, if the sum of the dominances exactly equals $\frac{2}{\gamma}$, then the system will oscillate between two points whose average is an equilibrium.

Interestingly, while it is possible to have a stable relationship between two people who are both dominant, it is not possible if they are both submissive. It is necessary that at least one person is dominant, and that the amount of dominance outweighs the amount of submissiveness of the other. Relationships that do not satisfy the conditions described in the preceding paragraph are unstable, and the state of at least one person will diverge to $\pm \infty.$ One special case is if the sum of the dominances is exactly zero. This situation may appear to be a ``perfect match,'' where one player is exactly as dominant as the other is submissive. We have shown that 
in this situation the system will diverge either to $(\infty,\infty)$ or $(-\infty,-\infty)$, depending on the initial conditions. The $(\infty,\infty)$ case actually seems to represent ``too perfect'' of a match. 

We have also seen that the system can diverge in various ways, including both players going to $\infty$, both going to $-\infty$, one going to $\infty$ and the other going to $-\infty$, and both alternating between $\infty$ and $-\infty.$ In general the system is not robust to changes in the initial conditions. If $A(0)$ exceeds $B(0)$ by a small amount $\epsilon$, the system may behave drastically differently than if $B(0)$ exceeds $A(0)$ by $\epsilon.$

While our model is motivated by social or romantic relationships, it can also be applied to professional or business relationships as well as diplomatic relationships between nations. In all of these settings it is natural to assume that one party will obtain more ``power'' if the other ``likes'' or is reliant on them more. The model could also apply to certain biological interactions between organisms and between automated agents or robots. 

\section{Related research}
\label{se:related}
Felmlee and Greenberg~\cite{Felmlee99:Dynamic} consider a model described by the following system of differential equations. In the system, $x^*$ denotes the goals, preferences, or ideals of the wife, and $y^*$ for the husband, while $x_t$ and $y_t$ refer to the frequency of some behavior of the wife and husband at time $t.$ 
\begin{eqnarray*}
\dot{x} &= & \frac{dx}{dt} = a_1(x^*-x_t) + a_2(y_t - x_t)\\
\dot{y} &= & \frac{dy}{dt} = b_1(y^*-y_t) + b_2(x_t - y_t)\\
\end{eqnarray*}

The second term in each equation resembles our system, though this model differs from ours in several ways. First, their model is continuous-time while ours is discrete. Their model also has 4 parameters, while ours has 3 (and only 2 if we normalize by assuming $\gamma = 1$). 
 They conclude that the stability of the system depends on the values of the four parameters; however, they do not perform exhaustive analysis over all settings. Our analysis allows us to obtain clear intuition for precisely when relationships are stable.

Gottman et al.~\cite{Gottman05:Mathematics} consider the following general discrete model. $W_t$ and $H_t$ denote the wife and husband's ``behavior scores'' at time $t$. The model assumes a sequential structure in which the wife moves first and the husband can then condition his behavior on the wife's action at the current timestep. $I_{HW}(H_t)$ denotes the ``influence'' of the husband's state at time $t$ on the wife's state at time $t+1$ when the husband's current state equals $H_t.$ Similarly $I_{WH}(W_{t+1})$ denotes the influence of the wife's state at time $t+1$ on the husband's state at time $t+1$ when the wife's current state equals $W_{t+1}.$ 
\begin{eqnarray*}
W_{t+1} &= & I_{HW}(H_t) + r_1 W_t + a\\
H_{t+1} &= & I_{WH}(W_{t+1}) + r_2 H_t + b\\
\end{eqnarray*}

The primary difference between their models and ours is the sequential assumption that the wife moves first at each round. Our model assumes that both players select their behavior simultaneously at each time step. If we ignore this distinction, then we can view their model as a generalization of ours where the woman plays the role of player $A$, the husband plays the role of player $B$, $r_1 = 1 - \alpha \gamma$, $r_2 = 1 - \beta \gamma$, $I_{HW}(H_t) = \alpha \gamma H_t$, $I_{WH}(W_t) = \beta \gamma W_t$, and $a = b = 0.$ Their analysis is for the general model and does not consider our specific instantiation of the influence functions. 

An earlier system of differential equations that closely resembles our system was proposed by Strogatz~\cite{Strogatz88:Love}. In this system, $r(t)$ denotes Romeo's love/hate for Juliet at time $t$, and $j(t)$ denotes Juliet's love/hate for Romeo at time $t.$ 
\begin{eqnarray*}
\frac{dr}{dt} &= & a_{11} r + a_{12} j\\
\frac{dj}{dt} &= & a_{21} r + a_{22} j \\
\end{eqnarray*}

Aside from the key difference that their model is continuous while ours is discrete, we can view their model as a generalization of ours where Romeo is player $A$, Juliet is player $B$, $a_{11} = 1 - \alpha \gamma,$ $a_{12} = \alpha \gamma,$ $a_{21} = 1-\beta \gamma,$ $a_{22} = \beta \gamma.$ This article just proposes the model and does not conduct any stability analysis. Subsequent work has analyzed the system for the case for two ``identically cautious lovers'' (i.e., $a_{11} = a_{22},$ $a_{12} = a_{21}$), and has left several other special cases as exercises~\cite{Strogatz15:Nonlinear}.

Prior work has also considered more complex models that incorporate additional factors in relationships such as external events, third parties, and delays. For example, in delay models it is assumed that the change in state at time $t$ may depend on states at some time $t - \tau,$ where $\tau$ is a measure of the delay~\cite{Bielczyk13:Dynamical,Matsumoto18:Delay}. Several extensions of the continuous-time linear systems have been studied by Rinaldi et al.~\cite{Rinaldi15:Modeling}, which include complexities such as incorporating nonlinearity, models for secure vs. insecure and unbiased vs. biased individuals, environmental stress, additional emotional dimensions, and love triangles.

While our model is quite simple, it appears to be the only discrete model that considers simultaneous behavior of the agents. We are able to provide a novel complete characterization of the behavior and stability under our model, and provide novel sets of conditions on the dominance and submissiveness under which stable relationships are possible. Since we have completely analyzed stability for all parameter values, it is unnecessary to consider numerical simulations as some prior works have done for more complex models.

\bibliographystyle{plain}
\bibliography{C://FromBackup/Research/refs/dairefs}

\end{document}